\documentclass[11pt]{article}

\usepackage[margin=1in]{geometry}
\usepackage{amsmath,amssymb,amsthm}
\usepackage{booktabs}
\usepackage{graphicx}
\usepackage{hyperref}

\newtheorem{theorem}{Theorem}

\begin{document}

\title{R\'{e}nyi exponent landscape of multipartite entanglement \\in free-fermion systems}

\author{A.~Sokolovs}

\date{March 2026}

\maketitle

\begin{abstract}
We show that the R\'{e}nyi tripartite information $I_3^{(\alpha)}$ of free fermions exhibits a \emph{qualitatively} $\alpha$-dependent scaling at small Fermi momentum, in sharp contrast to bipartite entropy where only the prefactor changes.  In the rank-1 regime ($z = k_F w \ll 1$), $I_3^{(\alpha)}$ receives contributions from two competing channels---a fractional-moment channel $\sim z^\alpha$ (active for non-integer~$\alpha$) and a polynomial channel $\sim z^m$ from the first nonvanishing inclusion-exclusion moment $\sigma_m$---yielding the scaling exponent $\beta_m(\alpha) = \min(\alpha, m)$ for $m$-partite information of $m$ adjacent strips.  Integer R\'{e}nyi indices $\alpha = 2, 3, \ldots$ are anomalous: the fractional channel closes and the exponent jumps to $m$ or higher.  A direct consequence is a \emph{replica obstruction}: $I_m^{(n)}/I_m^{(1)} \sim z^{m-1} \to 0$ for all integer $n \geq 2$, so the leading von~Neumann signal cannot be reconstructed from integer R\'{e}nyi data at the level of leading scaling---a situation with no bipartite analog.  Conversely, negativity-based measures ($\alpha = 1/2$) give a $20\times$ enhanced signal compared to von~Neumann.  We derive the underlying product formula for the coefficient $c(w_A, w_B, w_D)$, prove an $m$-partite generating function for the inclusion-exclusion moments, and verify all results numerically to high precision.
\end{abstract}

\section{Introduction}

The tripartite information $I_3 = S_A + S_B + S_D - S_{AB} - S_{BD} - S_{AD} + S_{ABD}$ probes genuinely tripartite correlations~\cite{CalabreseCardy2004}.  For holographic states $I_3 \leq 0$ universally~\cite{HaydenHeadrickMaloney}; free fermions violate this~\cite{AgonBuenoCasini,TamClaassenKane}.

In a companion paper~\cite{paper1}, we showed that for three adjacent strips of width~$w$, $I_3 = \sum_{k_y} g(k_F(k_y)\, w)$ with $g(z) \approx cz$ at small $z = k_F w$ and $c = 3\ln(4/3)/\pi$.  A natural question: how does this scale for R\'{e}nyi-$\alpha$ entropy, and what does this imply for experiments and for the replica trick?

For \emph{bipartite} entanglement of generic gapless states, R\'{e}nyi index changes only the prefactor: $S^{(\alpha)} \sim f(\alpha) \cdot \ln L$, same scaling for all~$\alpha$.  R\'{e}nyi-dependent exponents have been found for bipartite entropy in specially engineered states~\cite{SuginoKorepin2018,BarghathiDelMaestro2025}: number-conserving constructions can give $S^{(1)} \sim \sqrt{\ell}\ln\ell$ while $S^{(2)} \sim \ln\ell$.  We show that for \emph{multipartite} information $I_m$ ($m \geq 3$) of \emph{generic} free fermions, the scaling exponent depends on~$\alpha$ continuously via an exact formula $\beta_m(\alpha) = \min(\alpha, m)$, driven by a fundamentally different mechanism: the $m{-}1$ algebraic cancellations in the inclusion-exclusion combination act as a filter whose effect depends on the analyticity class of the entropy function.  This has concrete consequences for experiments and for the replica trick.

\section{Setup}
\label{sec:setup}

Consider $m$ adjacent strips of widths $w_1, \ldots, w_m$ on a one-dimensional chain with Fermi momentum~$k_F$.  The $m$-partite information involves $2^m - 1$ entropies $S_X^{(\alpha)}$ with inclusion-exclusion coefficients: $I_m^{(\alpha)} = \sum_{X} (-1)^{|X|+1} S_X^{(\alpha)}$.  Each block $X$ has total width $n_X = \sum_{i \in X} w_i$.  The $p$-th inclusion-exclusion moment is $\sigma_p = \sum_X (-1)^{|X|+1} n_X^p$.

Two algebraic identities are fundamental:
\begin{equation}\label{eq:cancel}
\sigma_1 = \cdots = \sigma_{m-1} = 0, \qquad \sigma_m = (-1)^{m+1}\, m!\, \prod_{i=1}^m w_i \neq 0.
\end{equation}
These follow from the Euler finite-difference identity and hold for all widths.  The first nonvanishing moment $\sigma_m$ is $m$-linear (for $m = 3$: $\sigma_3 = 6\,w_A w_B w_D$).  The full hierarchy is captured by the generating function
\begin{equation}\label{eq:genfun}
\sum_{p=m}^{\infty} \frac{\sigma_p}{p!}\, t^p = (-1)^{m+1} \prod_{i=1}^{m} (e^{w_i t} - 1),
\end{equation}
which we verify to $10^{-17}$ relative accuracy for $m = 3, 4, 5$.

\section{Master asymptotic formula}
\label{sec:master}

At small $z = k_F w$, the sine-kernel correlation matrix of each block is dominated by a single eigenvalue $\lambda_0 = n k_F/\pi$ (rank-1 regime).  The R\'{e}nyi-$\alpha$ entropy function $h_\alpha(\lambda) = \ln[\lambda^\alpha + (1{-}\lambda)^\alpha]/(1{-}\alpha)$ behaves differently depending on whether $\alpha$ is integer:

\emph{Non-integer~$\alpha$:} $h_\alpha(\lambda)$ contains a genuine $\lambda^\alpha$ term (non-polynomial).  The $m{-}1$ cancellations~(\ref{eq:cancel}) kill all polynomial terms up to $\lambda^{m-1}$ but \emph{cannot kill} $\lambda^\alpha$ for non-integer~$\alpha$.

\emph{Integer $\alpha \geq 2$:} $\lambda^\alpha + (1{-}\lambda)^\alpha$ is a polynomial, so $h_\alpha$ is polynomial in~$\lambda$.  All terms below $\sigma_m$ are cancelled.

These combine into:
\begin{equation}\label{eq:master}
\boxed{I_m^{(\alpha)}(z) = \underbrace{A_\alpha \cdot z^\alpha}_{\text{fractional channel}} + \underbrace{B_m \cdot z^m}_{\text{polynomial channel}} + \cdots}
\end{equation}
where $A_\alpha \propto F_\alpha \equiv \sum_X (-1)^{|X|+1} n_X^\alpha$ is the \emph{fractional moment} (nonzero for non-integer~$\alpha$; e.g., $F_{1/2} = 3 - 3\sqrt{2} + \sqrt{3} = 0.489$ for equal strips), and $B_m \propto \sigma_m \neq 0$ is the polynomial-channel coefficient, present for all~$\alpha$.

\section{R\'{e}nyi exponent landscape}
\label{sec:renyi}

\begin{theorem}[Exponent landscape]\label{thm:renyi}
For $m$-partite information of $m$ adjacent strips:
\begin{equation}\label{eq:beta}
\beta_m(\alpha) = \min(\alpha,\, m) \qquad \text{for non-integer } \alpha > 0.
\end{equation}
At integer $\alpha = n \geq 2$, the fractional channel closes and $\beta_m(n) \geq m$, with anomalies: $\beta_3(2) = 3$, $\beta_3(3) = 4$; $\beta_4(2) = 4$, $\beta_4(3) = 4$, $\beta_4(4) = 5$.
\end{theorem}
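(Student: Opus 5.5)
The plan is to expand $I_m^{(\alpha)}$ as a small-$z$ asymptotic series in which every coefficient is an inclusion--exclusion sum over blocks. We then read off the leading exponent using the cancellations~(\ref{eq:cancel}) together with a separate nonvanishing argument for the fractional moments. For each block $X$ of width $n_X$, write $S_X^{(\alpha)} = \operatorname{tr} h_\alpha(K_X)$, where $K_X$ is the sine-kernel correlation matrix restricted to $X$. We split $h_\alpha$ near $\lambda=0$ into an analytic part and a non-analytic part:
\[
h_\alpha(\lambda) = \frac{1}{1-\alpha}\ln\bigl[1 - \alpha\lambda + \tbinom{\alpha}{2}\lambda^2 - \cdots + \lambda^\alpha\bigr] = \sum_{p\ge 1} a_p(\alpha)\,\lambda^p + \sum_{j\ge1,\,k\ge 0} b_{jk}(\alpha)\,\lambda^{j\alpha+k}.
\]
For integer $\alpha$ the second sum is absent and the expression is a genuine power series in $\lambda$.

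\emph{Polynomial channel.} We use the fact that in the scaling regime $\operatorname{tr} K_X^p$ is, to leading order, a homogeneous function of degree $p$ in $n_X k_F$, namely $\operatorname{tr}K_X^p = \tau_p\,(n_X k_F/\pi)^p\,(1+O(z^2))$, with $\tau_1=1$ exactly. The corrections are themselves homogeneous in the same variable. Hence the analytic part contributes $\sum_p a_p(\alpha)\,\tau_p\,\pi^{-p} k_F^p\,\sigma_p$, after collecting all terms of total degree $p$, possibly with $\tau_p$ absorbing the higher corrections. By~(\ref{eq:cancel}), the first surviving order is $z^m$, with coefficient $B_m(\alpha) \propto C_m(\alpha)\,\sigma_m$. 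Here $C_m(\alpha)$ is an explicit finite combination of the $a_p(\alpha)$ and the continuum trace constants. The next order, $z^{m+1}$, carries $\sigma_{m+1}$. From the generating function~(\ref{eq:genfun}) this equals $(-1)^{m+1}\tfrac{(m+1)!}{2}\prod_i w_i\sum_i w_i \neq 0$.

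\emph{Fractional channel.} In the rank-1 regime the top eigenvalue is $\lambda_X = n_X k_F/\pi + O(z^3)$, and the remaining eigenvalues are $O(z^3)$. Therefore $\sum_X (-1)^{|X|+1}\operatorname{tr}K_X^\alpha = (k_F/\pi)^\alpha F_\alpha + O(z^{\min(\alpha+2,\,3\alpha)})$. We must show $F_\alpha\neq 0$ for non-integer $\alpha$. For this we write the inclusion--exclusion sum as an iterated finite difference:
\[
F_\alpha = (-1)^{m+1}\int_0^{w_1}\!\!\cdots\!\int_0^{w_m} \frac{d^m}{dx^m}x^\alpha\Big|_{x=\sum t_i}\,dt_1\cdots dt_m .
\]
The integrand equals $\alpha(\alpha-1)\cdots(\alpha-m+1)\,x^{\alpha-m}$. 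This has constant sign and is nonzero precisely when $\alpha\notin\{0,\dots,m-1\}$. So $F_\alpha\neq0$, and the integral converges even where $\alpha<m$ makes it singular at $x=0$, because $x\ge \min$ over the subcube, which is away from $0$ except at a corner, and there it is integrable. Likewise $b_{10}(\alpha)=1/(1-\alpha)\neq 0$. Comparing $z^\alpha$ with $z^m$ then gives $\beta_m(\alpha)=\min(\alpha,m)$ for non-integer $\alpha$. At $\alpha=m$ exactly, the fractional channel is absent anyway.

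\emph{Integer $\alpha$.} For integer $\alpha=n$ only the polynomial channel remains, so $\beta_m(n)\ge m$, with equality iff $C_m(n)\neq0$. The anomalies then reduce to a finite computation. We evaluate $C_m(n)$ from the Taylor coefficients of $\frac{1}{1-n}\ln[\lambda^n+(1-\lambda)^n]$ and the trace constants $\tau_p$. We expect to find that it vanishes for $(m,n)=(3,3)$ and $(4,4)$ and is nonzero for $(3,2),(4,2),(4,3)$. In the vanishing cases we show the $z^{m+1}$ coefficient is nonzero using $\sigma_{m+1}\neq0$. A heuristic reason for the vanishing cases: $-\tfrac12\ln(1-3\lambda+3\lambda^2)$ has a cubic coefficient that cancels against the trace contributions. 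This should be checked symbolically.

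The main obstacle is the rigorous control of the trace expansion. We need that the polynomial part really organizes into $\sigma_p$ with block-independent coefficients at each order, including lattice and finite-$w$ corrections. This requires the homogeneity of $\operatorname{tr}K_X^p$ to hold to the needed order, not just to leading order. I would first try to establish it in the continuum sine-kernel limit, via scaling $K(x,y)=\sin(k_F(x-y))/(\pi(x-y))$, and then bound lattice corrections separately. The precise anomaly values $C_3(3)=C_4(4)=0$ are the second delicate point, because they depend on exact $\tau_p$ values rather than only on the rank-1 picture.
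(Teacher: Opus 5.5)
There is a genuine gap, and it lies in the polynomial channel. Your fractional-channel argument is sound, and your iterated-difference integral proves $F_\alpha\neq0$, which the paper only asserts. But the theorem also needs the polynomial channel, both for non-integer $\alpha>m$ and for every integer case, and that part is not established. You never show $B_m\neq0$ for non-integer $\alpha>m$. For the five listed anomalies you only say what you ``expect to find,'' and the mechanism you give is wrong. You make $C_m(n)$ depend on unknown trace constants $\tau_p$ and attribute $C_3(3)=0$ to a cancellation against them. Your own rank-1 input (one eigenvalue $n_Xk_F/\pi+O(z^3)$, the rest $O(z^3)$) already forces $\tau_p=1$ for all $p$. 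Moreover, the cubic coefficient of $-\tfrac12\ln(1-3\lambda+3\lambda^2)$ is zero by itself. The missing computation is the split
\[
h_\alpha(\lambda)=\frac{\alpha}{1-\alpha}\ln(1-\lambda)+\frac{1}{1-\alpha}\ln\Bigl[1+\Bigl(\frac{\lambda}{1-\lambda}\Bigr)^{\alpha}\Bigr].
\]
The first term has $\lambda^p$ coefficient $\alpha/((\alpha-1)p)$. So for non-integer $\alpha>m$ the leading polynomial term is $\frac{\alpha}{(\alpha-1)m}(k_F/\pi)^m\sigma_m\neq0$, while the second term is only $O(z^\alpha)$. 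For integer $n$ the second term starts at $\lambda^n$ and contributes $-\binom{p-1}{n-1}/(n-1)$ for $n\le p<2n$. Hence the $\lambda^p$ coefficient of $h_n$ is $n/((n-1)p)$ for $p<n$, exactly $0$ at $p=n$, and $-n^2/(n^2-1)$ at $p=n+1$. This gives $\beta_m(n)=m$ for $n\neq m$ and $\beta_m(m)=m+1$, which covers all five listed values. The case $n=m$ needs this nonzero $(m+1)$-th coefficient, not only $\sigma_{m+1}\neq0$.

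Your route to rigor also rests on a false premise. Beyond leading order, $\operatorname{tr}K_X^p$ is not a function of $n_Xk_F$ alone. For the separated block $A\cup D$ the $O(z^2)$ correction depends on $w_B$, and on the lattice even a contiguous block gives $\sum_{i,j}(i-j)^2=n^2(n^2-1)/6$. If your homogeneity ansatz held, $\operatorname{tr}K^{m-2}$ and $\operatorname{tr}K^{m-1}$ would leak into orders $z^m$ and $z^{m+1}$ through $\sigma_m$ and $\sigma_{m+1}$. That is why you thought $C_m(n)$ mixes several $a_p$ with trace constants, and it would corrupt exactly the coefficients that decide the anomalies. The lemma you need is exact and holds for any kernel. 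Write $\operatorname{tr}K_X^p$ as a sum of cyclic products $K_{i_1i_2}\cdots K_{i_pi_1}$ over $p$-tuples of sites in $X$. Each tuple carries the inclusion--exclusion weight $\sum_{X\supseteq S}(-1)^{|X|+1}=(-1)^{|S|+1}(1-1)^{m-|S|}$, where $S$ is the set of strips it touches. This vanishes unless $S$ is all $m$ strips. Hence $\sum_X(-1)^{|X|+1}\operatorname{tr}K_X^p=0$ identically for $p<m$, and it equals $(k_F/\pi)^p\sigma_p+O(k_F^{p+2})$ for $p\ge m$. Combined with convergence of the Taylor series of $h_n$ for $\|K_X\|=O(z)$, this makes the integer case, and the analytic part for non-integer $\alpha$, rigorous without any homogeneity or continuum assumption. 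The paper avoids the issue by working directly with the rank-1 form $S_X=h_\alpha(n_Xk_F/\pi)$, where only the Taylor coefficients of $h_n$ enter.
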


\begin{proof}
From Eq.~(\ref{eq:master}): for non-integer~$\alpha < m$, the fractional channel dominates ($A_\alpha \neq 0$, $z^\alpha \gg z^m$), giving $\beta = \alpha$.  For non-integer $\alpha > m$, the polynomial channel dominates ($z^m \gg z^\alpha$); the coefficient $B_m \neq 0$ because $\sigma_m \neq 0$ (Eq.~\ref{eq:cancel}) and the $m$-th Taylor coefficient of $h_\alpha(\lambda)$ is nonzero for all non-integer~$\alpha$.  The crossover occurs at $\alpha = m$.

At integer $\alpha = n$: $F_n$ reduces to $\sigma_n$ (polynomial), so $A_n = 0$ for $n < m$ (by Eq.~\ref{eq:cancel}) and the fractional channel closes.  The exponent is then set by the polynomial expansion of~$h_n$.  For $n = 2$ ($m = 3$): $\sigma_2 = 0$ kills the $\lambda^2$ term, leaving $\sigma_3 \Rightarrow \beta = 3$.  For $n = 3$ ($m = 3$): expanding $\lambda^3 + (1{-}\lambda)^3 = 1 - 3\lambda + 3\lambda^2$ shows the cubic term vanishes identically (the polynomial has degree~2), so the $\sigma_3$ channel is closed and $\beta = 4$.  For $n \geq 4$: the cubic coefficient of $h_n$ is nonzero, giving $\beta = m$.
\end{proof}

We verify $\beta_m(\alpha)$ numerically for $m = 2, 3, 4, 5$ at 25+ values of~$\alpha$ spanning $0.3$--$5.0$, extracting~$\beta$ from $\ln|I_m^{(\alpha)}(k_{F,2})/I_m^{(\alpha)}(k_{F,1})|/\ln(k_{F,2}/k_{F,1})$ with adaptive $k_F$ ranges.

\begin{figure}[ht]
\centering
\includegraphics[width=\textwidth]{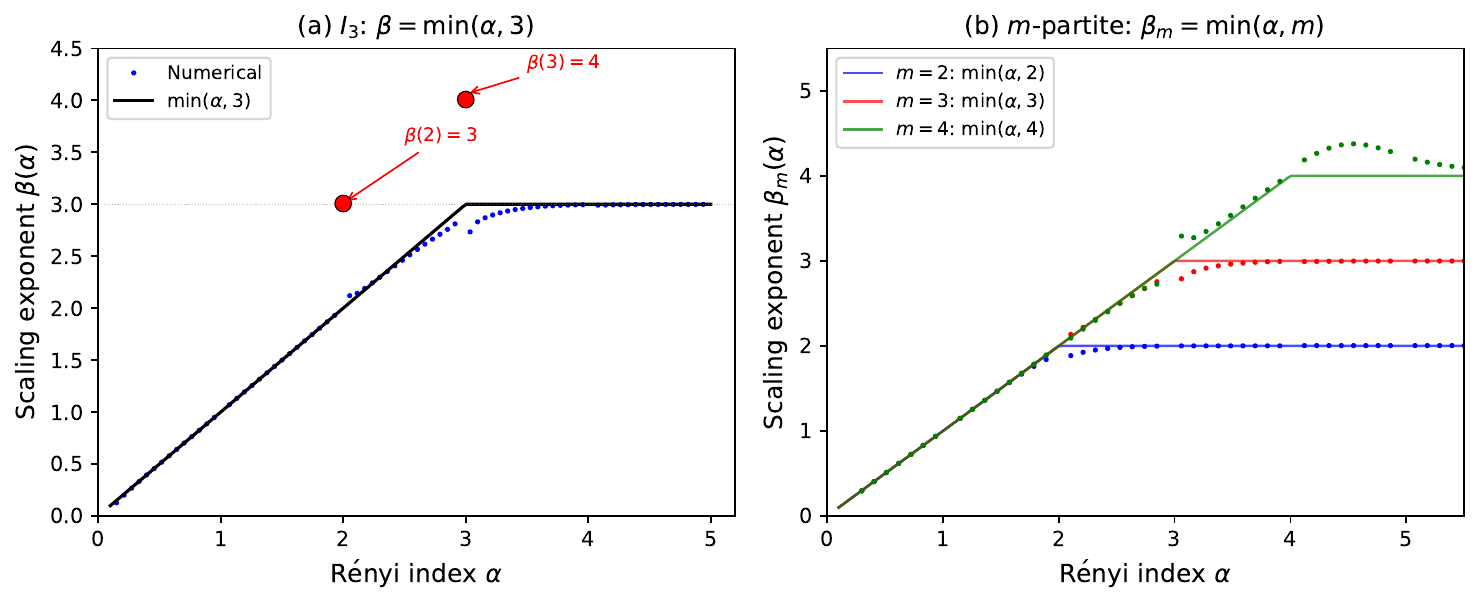}
\caption{\label{fig:renyi}(a)~$\beta(\alpha)$ for $I_3$ ($m = 3$).  Dots: numerical.  Line: $\min(\alpha, 3)$.  Red circles: integer anomalies at $\alpha = 2$ ($\beta = 3$) and $\alpha = 3$ ($\beta = 4$).  (b)~$m$-partite generalization: $\beta_m(\alpha) = \min(\alpha, m)$, crossover at $\alpha = m$.}
\end{figure}

\begin{center}
\begin{tabular}{c|cccccccc}
\toprule
 & \multicolumn{8}{c}{$\beta_m(\alpha)$, numerical} \\
$m$ & $\alpha{=}0.5$ & $1.0$ & $1.5$ & $2.0$ & $2.5$ & $3.0$ & $4.0$ & $5.0$ \\
\midrule
2 & 0.50 & 1.00 & 1.50 & 3.00$^\dag$ & 1.98 & 2.00 & 2.01 & 2.01 \\
3 & 0.50 & 1.00 & 1.50 & 3.01$^\dag$ & 2.48 & 4.01$^\dag$ & 2.99 & 2.99 \\
4 & 0.50 & 1.00 & 1.50 & 4.00$^\dag$ & 2.48 & 4.01$^\dag$ & 4.09$^\dag$ & 4.23 \\
5 & 0.50 & 1.00 & 1.50 & ---  & 2.49 & --- & 5.07$^\dag$ & 5.11 \\
\bottomrule
\end{tabular}
\end{center}

\noindent Daggers mark integer anomalies where $\beta$ exceeds $\min(\alpha, m)$.  Non-integer values agree with $\min(\alpha, m)$ to $1\%$.  The analysis is exact in the rank-1 limit $z \to 0$; at finite~$z$, corrections from higher Slepian eigenvalues enter at $O(z^{m+2}\ln z)$ but do not change the exponents.

\section{Consequences}
\label{sec:consequences}

\subsection{Replica obstruction}

The standard replica trick computes $\mathrm{Tr}(\rho^n)$ at integer $n = 2, 3, \ldots$ and continues to $n \to 1$.  For bipartite entropy, this works smoothly: $S^{(n)}/S^{(1)} \to O(1)$ for all~$n$.

For $I_m$ with $m \geq 3$:
\begin{equation}\label{eq:replica}
\frac{I_m^{(n)}(z)}{I_m^{(1)}(z)} \sim z^{\beta_m(n) - 1} \;\xrightarrow{z \to 0}\; 0 \qquad \text{for all integer } n \geq 2,
\end{equation}
since $\beta_m(n) \geq m \geq 3 > 1$.  The integer R\'{e}nyi data vanishes relative to the von~Neumann target.  Numerically, for $I_3$ at $z = 0.01$: $I_3^{(2)}/I_3^{(1)} = 10^{-4}$; for $I_4$: $I_4^{(2)}/I_4^{(1)} = 2 \times 10^{-6}$.  The obstruction grows as $z^{m-1}$.

The cause: integer $h_n$ is polynomial and ``sees'' only the polynomial channel ($\sigma_m$), while the von~Neumann $h_1 = -\lambda\ln\lambda$ accesses the singular channel that evades all $m{-}1$ cancellations.  Replica continuation must reconstruct a $z^1$ signal from $z^m$ data---from \emph{subleading corrections}, not from the leading term.

\subsection{Negativity enhancement}

For $\alpha < 1$, $\beta = \alpha < 1$: the signal is stronger than von~Neumann.  The case $\alpha = 1/2$ is physically relevant: R\'{e}nyi-$1/2$ entropy is related to the trace norm $\mathrm{Tr}\sqrt{\rho}$, which connects to entanglement negativity for pure-state bipartitions~\cite{VidalWerner2002,CalabreseCardyTonni2012}.  At $z = 0.01$:

\begin{center}
\begin{tabular}{lccc}
\toprule
Measure & $\alpha$ & $\beta$ & $|I_3^{(\alpha)}|$ \\
\midrule
Negativity-type & $1/2$ & $1/2$ & $5.5 \times 10^{-2}$ \\
Von Neumann & $1$ & $1$ & $2.7 \times 10^{-3}$ \\
R\'{e}nyi-2 & $2$ & $3$ & $2.7 \times 10^{-7}$ \\
\bottomrule
\end{tabular}
\end{center}

\noindent Negativity-based $I_3$ is $20\times$ stronger than von~Neumann and $2 \times 10^5$ stronger than R\'{e}nyi-2.  For detecting small Fermi pockets ($z \ll 1$), von~Neumann is the best \emph{standard} measure; negativity-type measures with $\alpha < 1$ would be optimal.

\subsection{R\'{e}nyi-2 blindness grows with $m$}

For integer R\'{e}nyi-2, $\beta_m(2) = m$ for $m \geq 3$: each additional party adds one power of~$z$ to the suppression.  R\'{e}nyi-2 is progressively blinder to higher multipartite correlations.  This is directly relevant for cold-atom experiments, where R\'{e}nyi-2 is the standard measurable via randomized protocols~\cite{Elben2023}.

\section{Product formula for the von Neumann coefficient}
\label{sec:product}

The von~Neumann coefficient $c$ that enters the fractional channel at $\alpha = 1$ has an exact closed form:

\begin{equation}\label{eq:product}
c(w_A, w_B, w_D) = \frac{1}{\pi}\ln\frac{(w_A{+}w_B)^{w_A+w_B}\,(w_B{+}w_D)^{w_B+w_D}\,(w_A{+}w_D)^{w_A+w_D}}{w_A^{w_A}\,w_B^{w_B}\,w_D^{w_D}\,S^S},
\end{equation}
where $S = w_A + w_B + w_D$.  This admits the additive decomposition $c = (S/\pi)\sum f(p_i)$ with $f(p) = (1{-}p)\ln(1{-}p) - p\ln p$ and $p_i = w_i/S$, maximized at the equal partition ($c_{\max} = S\ln(4/3)/\pi$).

A key ingredient is the \emph{disjoint-block degeneracy}: at rank-1, the separated block $A \cup D$ has the same eigenvalue as a contiguous block of the same width, because a constant kernel cannot distinguish separated from contiguous sites.  This produces the $S_3$ permutation symmetry (verified: all six permutations of $(1, 2, 3)$ give identical $g/k_F$ to machine precision).

The product formula yields a Widom-type factorization~\cite{GioevKlich2006} of the 2D tripartite information: $I_3 = (S\,H/4\pi^2)\, L\, A_F$, where $H(\{p_i\})$ depends only on partition fractions and $A_F$ is the Fermi sea area---not the boundary integral, which cancels due to $\sigma_1 = 0$.

\begin{table}[ht]
\caption{\label{tab:c}Product formula verified against Toeplitz numerics.  Accurate to $0.1\%$ at $k_F = 10^{-2}$, $4\%$ at $k_F = 10^{-1}$.  Error scales as $k_F^2|\!\ln k_F|$.}
\begin{center}
\begin{tabular}{ccc|cccc|c}
\toprule
$w_A$ & $w_B$ & $w_D$ & $k_F{=}10^{-3}$ & $10^{-2}$ & $5\!{\times}\!10^{-2}$ & $10^{-1}$ & $c$ \\
\midrule
1 & 1 & 1 & 0.2747 & 0.2748 & 0.2754 & 0.2767 & 0.2747 \\
2 & 3 & 5 & 0.8114 & 0.8124 & 0.8248 & 0.8454 & 0.8113 \\
3 & 3 & 3 & 0.8242 & 0.8251 & 0.8369 & 0.8570 & 0.8241 \\
\bottomrule
\end{tabular}
\end{center}
\end{table}

\section{Discussion}

The central result---$\beta_m(\alpha) = \min(\alpha, m)$---has a transparent origin in the master formula~(\ref{eq:master}).  The $m{-}1$ inclusion-exclusion cancellations act as a \emph{filter} that kills all polynomial contributions below~$\sigma_m$.  What passes through the filter depends on the analyticity of the entropy function: the singular $h_1 = -\lambda\ln\lambda$ and fractional-power $h_\alpha \sim \lambda^\alpha$ channels evade it; integer-$\alpha$ polynomial channels do not.

This creates a qualitative difference from bipartite entropy, where no such cancellations exist and $\beta(\alpha) = 1$ for all~$\alpha$ in generic gapless systems.  R\'{e}nyi-dependent exponents have been identified in bipartite entropy of specially engineered states~\cite{SuginoKorepin2018,BarghathiDelMaestro2025}, where the mechanism involves symmetry-resolved entanglement spectra with anomalous weight distributions.  The multipartite mechanism identified here is fundamentally different: it operates in \emph{generic} free-fermion ground states, requires no fine-tuning, and is controlled by the algebraic structure of the inclusion-exclusion combination rather than by the entanglement spectrum.  The difference grows with~$m$: for $I_m$, $m{-}1$ cancellations push the polynomial exponent to~$m$, widening the gap between integer and non-integer~$\alpha$.

The replica obstruction~(\ref{eq:replica}) has implications for field-theoretic calculations.  Standard replica-trick computations of multipartite information evaluate $I_m^{(n)}$ at integer~$n$ and continue to $n \to 1$.  Our result shows that the integer data scales as $z^m$ while the target scales as~$z$: the continuation must reconstruct the singular $-\lambda\ln\lambda$ channel from data that lacks it.  This is not merely a technical difficulty but a structural feature: the information needed for the $z^1$ coefficient resides in subleading corrections to $I_m^{(n)}$, not in its leading term.

\section{Conclusions}

We have established:

(i) A master asymptotic formula~(\ref{eq:master}) with two competing channels---fractional ($z^\alpha$) and polynomial ($z^m$)---yielding the R\'{e}nyi exponent $\beta_m(\alpha) = \min(\alpha, m)$ (Theorem~\ref{thm:renyi}).

(ii) A replica obstruction: $I_m^{(n)}/I_m^{(1)} \to 0$ for all integer $n \geq 2$, growing as $z^{m-1}$.

(iii) Negativity-based $I_3$ ($\alpha = 1/2$) is $20\times$ more sensitive than von~Neumann; R\'{e}nyi-2 is $10^5\times$ weaker.

(iv) An exact product formula for $c(w_A, w_B, w_D)$ with additive decomposition and $S_3$ symmetry from disjoint-block degeneracy.

(v) An $m$-partite generating function $(-1)^{m+1}\prod(e^{w_it}-1)$ for the inclusion-exclusion moments, with $m{-}1$ cancellations $\sigma_1 = \cdots = \sigma_{m-1} = 0$.

All results are verified numerically to high precision.



\begin{thebibliography}{10}

\bibitem{paper1}
A.~Sokolovs, arXiv:2603.03103 (2026).

\bibitem{CalabreseCardy2004}
P.~Calabrese and J.~Cardy,
J.\ Stat.\ Mech.\ \textbf{0406}, P06002 (2004).

\bibitem{HaydenHeadrickMaloney}
P.~Hayden, M.~Headrick, and A.~Maloney,
Phys.\ Rev.\ D \textbf{87}, 046003 (2013).

\bibitem{GioevKlich2006}
D.~Gioev and I.~Klich,
Phys.\ Rev.\ Lett.\ \textbf{96}, 100503 (2006).

\bibitem{VidalWerner2002}
G.~Vidal and R.~F.~Werner,
Phys.\ Rev.\ A \textbf{65}, 032314 (2002).

\bibitem{Elben2023}
A.~Elben, S.~T.~Flammia, H.-Y.~Huang, R.~Kueng, J.~Preskill, B.~Vermersch, and P.~Zoller,
Nat.\ Rev.\ Phys.\ \textbf{5}, 9 (2023).

\bibitem{SuginoKorepin2018}
F.~Sugino and V.~Korepin,
Int.\ J.\ Mod.\ Phys.\ B \textbf{32}, 1850306 (2018).

\bibitem{BarghathiDelMaestro2025}
H.~Barghathi and A.~Del Maestro,
arXiv:2512.24533 (2025).

\bibitem{AgonBuenoCasini}
C.~A.~Ag\'{o}n, P.~Bueno, and H.~Casini,
SciPost Phys.\ \textbf{12}, 153 (2022).

\bibitem{CalabreseCardyTonni2012}
P.~Calabrese, J.~Cardy, and E.~Tonni,
Phys.\ Rev.\ Lett.\ \textbf{109}, 130502 (2012).

\bibitem{TamClaassenKane}
P.~M.~Tam, M.~Claassen, and C.~L.~Kane,
Phys.\ Rev.\ X \textbf{12}, 031022 (2022).

\end{thebibliography}
\end{document}